\newtheorem*{rep@theorem}{\rep@title}
\newcommand{\newreptheorem}[2]{%
\newenvironment{rep#1}[1]{%
 \def\rep@title{#2 \ref{##1}}%
 \begin{rep@theorem}}%
 {\end{rep@theorem}}}
\newtheorem{thm}{Theorem}[section]
\newtheorem{proposition}[thm]{Proposition}
\theoremstyle{definitionnition}
\newtheorem{definition}[thm]{Definition}
\theoremstyle{remark}
  \providecommand\BibTeX{{%
    \normalfont B\kern-0.5em{\scshape i\kern-0.25em b}\kern-0.8em\TeX}}}
\begin{document}

\title{Resource Pools and the CAP Theorem}

\author{Andrew Lewis-Pye}
\email{a.lewis7@lse.ac.uk}
\affiliation{%
  \institution{Department of Mathematics, London School of Economics}
  \streetaddress{Columbia House, Houghton Street}
  \city{London}
  \postcode{WC2A 2AE}
}

\author{Tim Roughgarden}
\authornote{Supported in part by NSF Award
    CCF-1813188, ARO grant W911NF1910294, and a grant from the
 Columbia-IBM Center for Blockchain and Data Transparency.}
\email{tim.roughgarden@gmail.com}
\affiliation{%
  \institution{Columbia University}
  \streetaddress{500 West 120th Street, Room 450}
  \city{New York}
  \postcode{NY 10027}
}


\begin{abstract}
  Blockchain protocols differ in fundamental ways, including the
  mechanics of selecting users to produce blocks (e.g., proof-of-work
  vs.\ proof-of-stake) and the method to establish consensus (e.g.,
  longest chain  rules vs.\ BFT-inspired protocols).  These fundamental
  differences have hindered ``apples-to-apples'' comparisons between
  different categories of blockchain protocols and, in turn, the
  development of theory to formally discuss their relative merits.

  This paper presents a parsimonious abstraction sufficient for
  capturing and comparing properties of many well-known permissionless
  blockchain protocols, simultaneously capturing essential
  properties of both proof-of-work and proof-of-stake protocols, and
  of both longest-chain-type and BFT-type protocols.
Our framework blackboxes the precise
  mechanics of the user selection process, allowing us to isolate the
  properties of the selection process which are significant for
  protocol design.  

We illustrate our framework's utility with two
  results.  First, we prove an analog of the CAP theorem from
  distributed computing for our
  framework in a partially synchronous setting.
This theorem shows that a fundamental dichotomy holds
  between protocols (such as Bitcoin) that are \emph{adaptive}, in the
  sense that they
  can function given unpredictable levels of participation, and
  protocols (such as Algorand) that have certain \emph{finality}
  properties.  Second, we formalize the idea
that proof-of-work (PoW) protocols
  and non-PoW protocols can be distinguished by the forms of
  permission that users are given to carry out updates to the state.
\end{abstract}

\begin{CCSXML}
<ccs2012>
   <concept>
       <concept_id>10003752</concept_id>
       <concept_desc>Theory of computation</concept_desc>
       <concept_significance>500</concept_significance>
       </concept>
 </ccs2012>
\end{CCSXML}

\ccsdesc[500]{Theory of computation}

\keywords{blockchains, cryptocurrencies, distributed computing, CAP theorem}


\maketitle

\section{Introduction} \label{secIntro}

The task of a permissionless blockchain  protocol is to establish consensus for
message ordering over a network of users. This job
is made difficult by the fact that, being subject to the laws of
physics, the underlying communication network must have
\emph{latency}, i.e.\ broadcast  messages will
necessarily take time to travel over the network of users. As a
consequence of this latency, malicious users may purposely cause the
order in which messages are first seen to be different for
different honest users, and some differences in ordering will anyway
be an honest consequence of varying propagation times between
different nodes of the network \cite{decker2013information}.

Network latency is especially problematic when we work at the
level of individual transactions, which may be produced at a rate
which is high compared to network latency.  For this reason it is
standard practice to collect transactions together into \emph{blocks},
which can then be produced at a rate which is much lower compared to
network latency. Given that we are working in a permissionless
setting, the basic question then becomes, ``Who should produce the
blocks?''.  
This question is answered differently by different protocols.
Three running examples, which we shall use for reference in this paper, are:
\begin{itemize}

\item Bitcoin;

\item Snow White; 

\item Algorand. 

\end{itemize}

Of course, Bitcoin \cite{nakamoto2019bitcoin} is the best known
proof-of-work (PoW) protocol, and is also a \emph{longest chain}
protocol. This means that forks may occur in the blockchain, but that
honest miners will build on the longest chain. At a high level, Snow
White \cite{bentov2016snow} might be seen as a proof-of-stake (PoS)
version of Bitcoin -- it is also a longest chain protocol, but now
miners are selected with probability proportional to their stake in
the currency, rather than their hashing power. We use Algorand
\cite{chen2016algorand} as an example of a `BFT' protocol.\footnote{By
  `BFT protocols' we shall (informally) mean either: (a) Consensus
  protocols which are defined in the permissioned setting in order to
  deal with byzantine faults, or (b) Consensus protocols which work in
  the permissionless setting by importing protocols of type (a).} This
means that users are selected, and asked to carry out a consensus
protocol designed for the permissioned setting. So users are not only
asked to produce blocks, but also other objects, such as \emph{votes}
on blocks. Algorand is also a PoS protocol. 

Such fundamental differences between competing blockchain
protocols have hindered ``apples-to-apples'' comparisons between them,
and a majority of the research-to-date has focused on the analysis of
specific protocols (or narrow classes of protocols).  Our goal here is
to complement existing work on protocol-specific analysis with a
mathematical framework for formally discussing the relative merits of
protocols of very different types.

The first and main aim of this paper is to develop a model for the
analysis of permissionless blockchain protocols that blackboxes the
precise mechanics of the user selection process, allowing us to
isolate the properties of the selection process that are significant,
and to make comparisons between blockchain protocols of different
types.  Section \ref{setup} describes a framework of this kind,
according to which protocols run relative to a \emph{resource
  pool}. This resource pool specifies a balance for each user over the
duration of the protocol execution (such as hashrate or stake), which
may be used in determining which users are permitted to make
publications updating the state.

With this framework in place,  we then turn our
attention to consider how properties of the resource pool may
influence the functionality of the resulting protocol. In Sections \ref{A+F} and \ref{imp}, we will be
concerned with the distinction between scenarios in which the size of
the resource pool is known (e.g.\ PoS), and scenarios where the size
of the resource pool is unknown (e.g. PoW). 
We refer to these as the {\em sized} and {\em unsized} settings,
respectively.
We will find that the choice of setting is intimately related to a
fundamental tradeoff for permissionless blockchain protocols,
which can be viewed as an analog of the CAP theorem
from distributed computing \cite{gilbert2002brewer} for our framework:
In a partially synchronous setting,
a protocol cannot deliver \emph{finality} for block confirmations
while at the same time being \emph{adaptive}, in the sense that it
remains live without knowledge of the size of the resource
pool.\footnote{Our impossibility result contrasts with previous works
that prove positive results about the liveness and consistency
properties of the Bitcoin protocol in more strongly synchronous
settings (such as synchronous networks~\cite{GKL15} and networks with
bounded message delays~\cite{GKL15,SZ15,LRS18,WHGSW16}).}

In Section \ref{multi} we will examine a fundamental distinction
between PoW and non-PoW protocols, which concerns the forms of
permission that users are given to carry out updates to the state. We
formalize the idea that, under quite general
conditions, PoW protocols are distinguished by their ability to allow
the broadcast of specific blocks (and other objects), rather than
granting permission to broadcast any object from a large class (such as
any valid block extending a given position in the
blockchain). Historically, this is one of the distinctions between
PoW and PoS that has received the most attention in the literature
\cite{brown2019formal}.

\subsection{Finality and Adaptivity} \label{F+A_intro}

Our main impossibility result, which appears in Section \ref{imp}, concerns notions of `finality', `adaptivity', `security', and `liveness'.  
Before defining these terms,
it is useful to consider how these
notions relate to another informal division that is often drawn in the
cryptocurrency community and in the literature \cite{brown2019formal}, between
`longest chain' type protocols such as Bitcoin and Snow White on the one hand, and so
called `BFT' protocols such as Algorand \cite{chen2016algorand} and Tendermint
\cite{buchman2018latest} on the other.

Roughly speaking, the term `longest chain' is normally applied to
protocols which are derived from Bitcoin, and which work by having
miners select a fork of the blockchain to build on, which is defined
in terms of some sort of scoring function for chains. The selected
chain might be the one with the most PoW attached, or it might be
the longest, or it might be defined by an inductive process that
counts the number of descendants, as in the GHOST
protocol \cite{sompolinsky2015secure}. BFT protocols, on the other
hand, work by selecting a subset of users and having them carry out
a more traditional consensus protocol which is defined for the permissioned
setting.  The terms `BFT' and `longest chain' are thus descriptive of
\emph{where} protocols come from, but don't yet formally define
classes of protocols in a way that allows us to analyse the
differences between them and prove results contrasting the
performance of these classes of protocols in a broad sense.

The informal idea is that there is a trade-off.
While BFT protocols potentially offer \emph{finality} (whatever that
should mean), this comes with the price that the protocol will stall
if participation levels drop. In  Algorand, for example, committees of
users are selected in rounds, and block confirmation requires a
certain proportion of committee members to contribute signatures. If
participation levels drop to a point where insufficiently many
signatures are being produced for each block, then the process of
block confirmation will come to a standstill.  Longest chain protocols
such as Bitcoin, on the other hand, do not deliver finality but are
\emph{adaptive}, in the sense that they naturally adjust and remain
live in the face of fluctuating levels of participation.

In order to make this trade-off precise, we must decide how to
formalize `finality' and `adaptivity'.  First, let us consider
finality.  
To define this notion, we focus 
on differentiating the settings (e.g., the degree of assumed
synchrony) in which protocols are \emph{secure}, meaning that block
confirmation can be relied on.  Bitcoin and most other longest
chain-type protocols will not be considered to have finality, because
block confirmation can only be relied on assuming the ongoing
participation of a large fraction of honest users: Block confirmation
is not secure against unbounded network partitions \cite{WHGSW16}. Algorand and most
BFT-type protocols will have finality because there is essentially
zero chance that confirmed blocks will be rolled back, even in the
event that honest users cease to participate after the block has been
confirmed, e.g.\ due to an extended period of network failure.
Section \ref{A+F} formalizes this distinction and our finality notion
in terms of security in a \emph{partially synchronous setting}.  The
distinction between synchronous and partially synchronous settings is
standard when working with permissioned protocols~\cite{DLS88};
see Section \ref{setup} for the definitions.

Next, let us consider adaptivity. Recall that, according to our
framework, protocols run relative to a resource pool. This resource
pool specifies a balance for each user over the duration of the
protocol execution (such as hashrate or stake), which may be used in
determining which users are permitted to make publications updating
the state. So Bitcoin will be modeled as running relative to a
resource pool that specifies the hashrate of each user over the
duration of the protocol execution, and users with greater hashrate
will be more likely to be selected for block production. In Section
\ref{A+F}, we will formally define adaptivity in terms of the
information about the resource pool that is available to the protocol.
First, we will define the \emph{unsized} setting, so as to formalize
contexts in which the total size of the resource pool is information
which is not available to the protocol. With this definition in place,
and once we have formalized the notion of `liveness' -- roughly,
liveness is the property that with high probability the set of
confirmed blocks will grow over time -- we will then be able to define
adaptive protocols as those which are live in the unsized
setting.\footnote{One might want a protocol to satisfy stronger
  notions of liveness, such as quantitative lower bounds on chain
  growth or chain quality (as in e.g.~\cite{GKL15,WHGSW16}).  The fact
  that we work with such a weak notion of liveness only strengthens
  our impossibility result.}
Adaptive protocols are thus those which are like
Bitcoin---not necessarily in an operational sense, but in
the sense that they are live even when the total size of the resource
pool is unknown
to the protocol.

\subsection{The tradeoff between adaptivity and finality.}  \label{introtrade} 

With these definitions in place, we will then be able to formally
prove Theorem \ref{ourAmazingTheorem} below, which can be seen as an
analog of the CAP Theorem \cite{gilbert2002brewer} from distributed
computing for our framework. Roughly speaking, `security' in our
framework corresponds to `atomic consistency' in the framework in
which the CAP Theorem is proved in \cite{gilbert2002brewer}, and
`liveness' corresponds to `availability'. These correspondences are
not exact, however. While availability requires a response even
during extended periods of asynchrony, our definition of liveness
explicitly rules out the requirement that new confirmed blocks should
be produced under such conditions. The key observation in the proof of
Theorem \ref{ourAmazingTheorem} is that, in the unsized setting,
extended periods of asynchrony cannot be distinguished from a waning
resource pool. Liveness therefore forces the production of new
confirmed blocks during appropriately chosen periods of
asynchrony. Liveness and security are thus incompatible in the
partially synchronous and unsized setting, while the same is not true
in the partially synchronous and sized setting.

\begin{reptheorem}{ourAmazingTheorem}[Impossibility Result]
No protocol is both adaptive and has finality.
\end{reptheorem}

This result establishes a simple dichotomy for permissionless
blockchain protocols. A protocol can be adaptive or it can have
finality, but not both. It also draws a clean and formal line between
longest chain protocols such as Bitcoin and Ethereum
\cite{buterin2018ethereum}, or PoS implementations such as Snow White
\cite{bentov2016snow} on the one hand, and BFT protocols such as
Algorand, Casper FFG and PoS implementations of Tendermint or Hotstuff
\cite{yin2018hotstuff} on the other. While the former group are all
adaptive, the latter group all have finality.

Another interesting conclusion that can be drawn from Theorem
\ref{ourAmazingTheorem} concerns PoW protocols. 
PoS protocols are generally best modeled using the sized setting, while
PoW protocols are generally best modeled using the unsized
setting---the total stake is typically information which is available
to a protocol from the beginning of its execution, while the amount
of computational power used to provide PoW can vary over time in an
unpredictable way.  
To the extent that PoW protocols must operate in an unsized setting
(and guarantee liveness),
Theorem \ref{ourAmazingTheorem} implies that they cannot have
finality.\footnote{The continual adjustment of the difficulty 
parameter in Bitcoin can be viewed as an attempt to maintain an
approximation of the sized setting in a fundamentally unsized
setting.  See Section~\ref{ss:dfm} for further discussion.}

\subsection{Related Work}

The novel feature of Bitcoin that distinguishes it from previous
consensus protocols is that it is \emph{permissionless}, i.e.\ it
establishes consensus between a set of users that anybody can join,
with as many identities as they choose in any given role. This paper
can be seen as a step towards developing a formal
framework for the analysis of permissionless protocols akin to the
extensively developed one for permissioned protocols 
\cite{lynch1996distributed}. The study of byzantine fault
tolerant (BFT) consensus protocols in the permissioned setting dates
back at least to 1980 \cite{pease1980reaching,lamport2019byzantine}.
Among those BFT protocols of interest to us here, we can distinguish
two forms:

\begin{enumerate} 

\item The oldest relevant form of BFT protocol is aimed at solving the `Byzantine Generals Problem' \cite{pease1980reaching,lamport2019byzantine}. The task here is to reach consensus on a single yes/no decision. In applying these methods to the blockchain setting, one approach, as employed by Algorand, is to run such a BFT protocol for each in a sequence of proposed blocks, until consensus is reached for each block as to whether it should  be included in the blockchain. A drawback of this approach is that positive or negative consensus has to be reached for one block at a time. A large number of rounds of communication  may be required before agreement is reached, giving a corresponding negative impact on confirmation times. 

\item Perhaps more appropriate for the blockchain setting, since they are designed to achieve essentially the same task as permissionless blockchain protocols but in the permissioned setting, are BFT protocols designed for the purpose of 
state machine replication (SMR) \cite{castro1999practical,castro2002practical}. The task of such protocols is for a set of distributed users to agree on an order of execution for an ever growing list of client-initiated service commands -- replace `client-initiated service commands' with `transactions', and this is precisely the aim of permissionless blockchain  protocols. 
The advantage of this approach is that it allows for considerably simpler protocols, which might only require two or three rounds of communication per block.

\end{enumerate} 

The CAP theorem is one of the most celebrated theorems in the
distributed computing literature. The theorem proved by Gilbert and
Lynch \cite{gilbert2002brewer} is a formal version of a conjecture due
to Brewer, which is made in the context of distributed web
services. The theorem establishes an impossibility result: It is
impossible for such a distributed service to simultaneously achieve
the three desirable properties of \emph{consistency, availability},
and \emph{partition tolerance}. For formal definitions of these terms
we refer the reader to \cite{gilbert2002brewer} and
\cite{lynch1996distributed}. The relationship between the CAP Theorem
and Theorem \ref{ourAmazingTheorem} was briefly discussed in Section
\ref{introtrade}, and we shall expand on this discussion in Section
\ref{imp}.

\section{The framework}  \label{setup} 

\subsection{Predetermined and undetermined variables} 

Our aim here is to establish a framework for analysing permissionless
blockchain protocols that blackboxes the precise mechanics of the user
selection process. This will allow us to prove impossibility results,
and to isolate the properties of the selection process that are
significant, in the sense that they impact the way in which the
protocol must be designed, or influence properties of the resulting
protocol (such as security in a range of settings).
 
In order to define properties such as liveness and security later on,
it will be convenient to consider protocols that are specified
relative to a finite set of initially defined 
\emph{parameters}. 
For Algorand to run
securely, for example, one must first decide how long the protocol is
to run for, and then choose committee sizes accordingly. The duration
of the execution is therefore required as a
parameter
of the
protocol. Variables that are specified before the execution of the protocol as 
parameters, or which
take the same value for all executions of the protocol, are referred
to as 
\emph{predetermined}. 
Variables (such as the number of users)
that are not predetermined, will be referred to as
\emph{undetermined}.

\subsection{The users}  \label{users}

We consider settings in which protocols are executed by an undetermined
 set of pseudonymous users, this set being of undetermined
size.  Each user is given access to a signature scheme, and controls a
set of public keys by which they will be known to other users. We use
the variable $U$ to range over users, while $\mathtt{U}$ will be used
to range over public keys -- each user may have many public keys.
Amongst all users, there is one who is distinguished as the
\emph{adversary} and who controls an undetermined set of public keys.
Users other than the adversary are referred to as \emph{honest}. 

We will suppose that each user is a deterministic computing device,
which has amongst the actions it can perform calls to certain oracles,
as well as certain external functionalities such as the ability to
broadcast messages. The protocol specifies an \emph{instruction set}, which is a program which is run
by every user, other than the adversary. The adversary can follow any
program of their choosing.

While we might think of the set of users as forming a network over
which messages can propagate, in order to keep things as simple as
possible, we shall not make the network explicit in our framework.
Users simply have the ability to \emph{broadcast} messages. Once a
message is broadcast by a public key belonging to a given user, it may
subsequently be \emph{delivered} to other users at different stages of
the execution.

\subsection{Network failures} 

We suppose that protocols are specified to run for a predetermined
sequence of timeslots, each timeslot being of predetermined
length. The appropriate length of these timeslots 
depends on the protocol to be modeled.
For many PoS protocols, an appropriate length is 
slightly more than the network latency, i.e.\ the time it takes a block to propagate the
network. (Thus, each user might carry out many instructions during a
single timeslot.)  We will see that PoW protocols 
might be better
modeled using very short timeslots. 
In any case, the sequence
of timeslots is called the \emph{duration} $\mathcal{D}$.  At the beginning
of each timeslot in the duration, broadcast messages may be delivered
to various users. The \emph{message state} relative to a given user
is the set of all broadcast messages which have been delivered to
them.  The message state for a given user is therefore monotonically
increasing over time.  In order to be broadcast, a message must be
\emph{valid}, meaning that it must have a certain structure and that
certain other conditions, expanded on below, are also satisfied.

For example, if modeling Bitcoin or Snow
White, a user's message state will be the set of (valid) blocks that
have been delivered to them. Thus the message state will not, in
general, be a single chain of blocks. For Algorand, a user's message state
will be all those messages which have been delivered to them, which
are either valid blocks, or else the signed messages of committee
members exchanged while reaching consensus on blocks.

It is standard in the distributed computing literature to consider a
variety of \emph{synchronous, partially synchronous}, or
\emph{asynchronous} settings, in which users may or may not have
clocks which are almost synchronised, or run at varying speeds, and
where message delivery might be reliable or subject to various forms
of failure.  For the sake of simplicity, we will suppose here that
users' clocks are synchronised -- while this might seem like a strong assumption, this only strengthens our impossibility result.  We will, though, allow for
periods of network failure, during which the adversary is able to
control message delivery. In order to formalize this, we will suppose
that the duration is divided into intervals that are labelled either \emph{synchronous} or
\emph{asynchronous}  (meaning that each timeslot is either
synchronous or asynchronous).  We will suppose that, during
synchronous intervals, message delivery time is probabilistically
distributed for each pair of users.
During asynchronous intervals, we
suppose that the adversary is able to interfere with message delivery
as they choose, i.e.\ the adversary can leave messages to be delivered
in a probabilistic fashion as normal, can cause undelivered messages
to be delivered early, or can stop messages being delivered at all for
the duration of the asynchronous interval.  (Though we can always
assume that a message broadcast by a user is in effect delivered
instantaneously to that user.)
  We then distinguish two
\emph{synchronicity settings}. In the \emph{synchronous} setting it is
assumed that there are no asynchronous intervals during the duration,
while in the \emph{partially synchronous} setting there may be
undetermined asynchronous intervals.

\subsection{The structure of the blockchain} \label{blockstructure}  

Amongst all broadcast messages, there is a distinguished set referred
to as \emph{blocks}, and one block which is referred to as the
\emph{genesis block}.  Unless it is the genesis block, each block $B$
has a unique \emph{parent} block $\texttt{Par}(B)$, which must be
uniquely specified within the block message. Each block is produced by
a single user, $\texttt{Miner}(B)$, but may contain other broadcast
messages which have been produced by other users.  No block can be
broadcast by $\mathtt{U}:=\texttt{Miner}(B)$ at a point strictly prior
to that at which its parent has been delivered to $\mathtt{U}$.
$\texttt{Par}(B)$ is defined to be an \emph{ancestor} of $B$, and all
of the ancestors of $\texttt{Par}(B)$ are also defined to be ancestors
of $B$. If $B$ is not the genesis block, then it must have the genesis
block as an ancestor. At any point during the duration, the set of
broadcast blocks thus forms a tree structure.  If $M$ is a message
state, then we shall say that it is \emph{downward closed} if it
contains the parents of all blocks in $M$. By a \emph{leaf} of $M$, we
shall mean a block in $M$ which is not a parent of any block in
$M$. If $M$ is downward closed and contains a single leaf, then we
shall say that $M$ is a \emph{chain}.

\subsection{The resource pool} 

Protocols are run relative to a (predetermined or undetermined)
\emph{resource pool}, which in the general case is a function
$\mathcal{R}: \mathcal{U} \times \mathcal{D} \times \mathcal{M}
\rightarrow \mathbb{R}_{\geq 0}$,
where $\mathcal{U}$ is the set of public keys, $\mathcal{D}$ is the
duration and $\mathcal{M}$ is the set of all possible message
states.  So $\mathcal{R}$ can be thought of as specifying the resource
balance of each user at each timeslot in the duration, possibly
relative to a given message state.  For a PoW protocol like Bitcoin,
the resource balance of each public key will be their (relevant)
computational power at the given timeslot (which is generally
independent of any
message state). For PoS protocols, such as
Snow White and Algorand, however, the resource balance will be fully
determined by `on-chain' information, i.e.\ information recorded in
the message state  
$M$.
Generally, a chain of blocks 
$C\subseteq M$ 
will first be selected. So $C$ might be the longest chain, or the
longest chain of blocks that have been approved by committee
members. Then 
$\mathcal{R}(\mathtt{U},t,M)$ 
will be some function of $\mathtt{U}$'s stake as recorded by the blocks in $C$.\footnote{The details here will depend on the specific protocol. It's standard to insist that a user has had stake in the currency recorded for a certain number of timeslots before they are allowed to produce blocks, for example. So $\mathtt{U}$'s resource balance might be their stake according to $C$ or some initial segment of $C$, or else 0 if $\mathtt{U}$ has not been recorded as having non-zero stake for sufficient time.}

  By the \emph{total resource
  balance} $\mathcal{T}$, we mean the sum of the resource balances of
all public keys;
this is the function
$\mathcal{T}:\mathcal{D}\times \mathcal{M} \rightarrow
\mathbb{R}_{\geq 0}$ defined by $\mathcal{T}(t,M) = \sum_{\mathtt{U}} \mathcal{R}(\mathtt{U},t,M)$.
It should be noted that the resource pool is
a variable, meaning that a given protocol may be expected to be live
and secure with respect to a range of resource pools.

\subsection{The sized and unsized settings} \label{sizedunsized}

Just as we considered two synchronicity settings earlier, we also
consider two \emph{resource settings}. The basic idea is that
in the \emph{sized} setting, the total resource balance is information
which is available to the protocol (and the permitter, as described in Section \ref{permitter}), while in the
\emph{unsized} setting it is not.  The precise details are as follows.

\vspace{0.3cm}
\noindent \textbf{The unsized setting}.  For the unsized setting,
$\mathcal{R}$ (and hence $\mathcal{T}$) is undetermined, with the only restrictions being:
\begin{enumerate} 
\item $\mathcal{R}$   will be a function from $ \mathcal{U} \times
  \mathcal{D} \times \mathcal{M}$ to $\mathbb{R}_{\geq 0}$ satisfying
  the requirement that, at all timeslots in the duration,  the total
  resource balance belongs to a fixed
interval $[I_0,I_1]$, where $I_0>0$ is sufficiently small and
$I_1 >I_0$ is sufficiently large.\footnote{We consider resource pools with range restricted to the fixed interval $[I_0,I_1]$ because it turns out to be an overly strong condition to require a protocol to be live without
 \emph{any} further conditions on the total resource balance, beyond
 the fact that it is a function to $\mathbb{R}_{\geq 0}$. We wish to be
 able to talk about Bitcoin as live in the unsized setting, for
 example, but liveness will certainly fail if $\mathcal{R}$ is the
constant 0 function, or if the total resource balance decreases
sufficiently quickly over time.}
\item There may also be bounds placed on the resource balance of the adversary. 
\end{enumerate} 

We shall refer to the set of all resource pools satisfying these
restrictions as the \emph{possible resource pools}, and in Section
\ref{A+F} we shall define a protocol to be live if it is live for all
possible resource pools.

\vspace{0.3cm}
\noindent  \textbf{The sized setting}. For the sized setting, the total resource balance $\mathcal{T}$ is a predetermined function  $\mathcal{T}:  \mathcal{D} \times \mathcal{M} \rightarrow \mathbb{R}_{\ge 0}$.

\vspace{0.3cm} \hspace{0.3cm} The basic idea is that PoS protocols
will generally be best modeled using the sized setting, while PoW
protocols are best modeled using the unsized setting, since one does
not know the total resource balance (e.g., total hashrate in each
timeslot) in advance. There are some nuanced considerations, however.
With a PoS protocol, for example, one might not be able
to predict accurately what percentage of the stake will actually come
online and broadcast as requested by the protocol. So there may be
situations in which it is appropriate to define the resource balance
in terms of the online or contributing stake, and where it should be
recognised that only partial information will be available concerning
the total resource balance.  Equally, there may be contexts in which
good bounds can be given on the total resource balance over the
duration for the PoW case. The example of Bitcoin will be discussed
further in Section \ref{discuss}.

\subsection{The permitter oracle} \label{permitter} 

In order to specify how the resource pool is to be used, we shall make use of the notion of a \emph{permitter oracle}. This is the most critical part of the model, and is the part that blackboxes user selection, since it is the permitter oracle that grants permission to broadcast valid messages. The permitter oracle  need not be implemented explicitly in the blockchain being modeled, and is a mathematical abstraction that allows for  the discussion and comparison of blockchains of very different types. It is designed to be as simple as possible, subject to this goal. 

As described in Section \ref{users}, we consider each user to be a
computing device with access to certain external oracles and
functionalities. At any given timeslot $t\in \mathcal{D}$, a user's
\emph{state} is entirely specified by the set of public keys they
control, the protocol parameters, their message state and the set of
permissions they have been given by the permitter oracle $\mathtt{O}$.  The
\emph{protocol} $\mathtt{P}=(\mathtt{I},\mathtt{O})$ is then a pair, where the \emph{instruction set} $\mathtt{I}$ is a set of deterministic and
efficiently computable instructions, which specifies precisely what
actions honest users should carry out at each timeslot, as a function of the
timeslot and their state at that timeslot. The instructions of the
protocol are therefore a function of the timeslot, the keys controlled
by the user, the protocol parameters, their message state, and the
set of permissions they have been given by the permitter.

One of the external functionalities each user has is the ability to
broadcast valid messages.  Amongst the conditions required for validity
is that the public key responsible for the broadcast has been given
\emph{permission} by the permitter oracle $\mathtt{O}$, which is an
oracle to which users have access. We thus suppose that users can make
`requests' to the permitter, of the form $(\mathtt{U},M,t',A)$, where
$\mathtt{U}$ is a public key under their control, $M$ is a possible
downward closed message state, $t'$ is a timeslot, and where $A$ is
some (possibly empty) extra data. Given a request of this form, the
permitter may then respond by giving them permission to broadcast
certain messages.  The response of the permitter to a request
$(\mathtt{U},M,t',A)$ will be assumed to be a probabilistic function
of the protocol parameters, the actual timeslot $t$, the previous
requests made by $\mathtt{U}$, the tuple $(\mathtt{U},M,t',A)$, and of
the user's resource level
$\mathcal{R}(\mathtt{U},t',M)$.\footnote{Another way to interpret
  these conditions is that the response to a request (or the
  probability distribution on that response) should be fully
  determined by information known to the user making the request -- in practice, users should be able to check for themselves whether or
  not they have permission to broadcast.}

The conditions we have described above 
non-trivially restrict what the
permitter can do.
For example, consider the unsized setting, and suppose that the
total resource pool (e.g., total hashrate) 
cannot be deduced from the protocol parameters, $t$,
previous requests made by $\mathtt{U}$, the tuple
$(\mathtt{U},M,t',A)$, and the user's resource level
$\mathcal{R}(\mathtt{U},t',M)$.  In this
case, the framework requires that the response of the permitter 
be independent of the total resource pool.  (Whereas in the sized
setting, if the total resource pool can be deduced from the broadcast
state~$M$, the permitter is not so constrained.)

As we shall discuss later, the form of the permission given by the
permitter might be permission to broadcast a specific message (such as
the data $A$ proposed by the user in their request), or it might be
permission to broadcast any number of messages satisfying certain
criteria (such as any block that extends the message state at a
given location).  In what follows we shall consider various settings,
depending on what assumptions can be made about the relationship
between the permitter and the resource pool.

It should be noted that the roles of the resource pool and the
permitter are different in the sense that, while the resource pool is
a variable (meaning that a given protocol may be expected to be live
and secure with respect to a range of resource pools), the permitter is part of the protocol description  (meaning that a protocol is only required to run relative to a specific permitter
oracle).

\subsection{Modeling simple PoW and PoS protocols}  \label{examples}

For concreteness, we next consider how some simple PoS and PoW
protocols can be modeled using our framework.  In this section, we will give a brief summary. Then in Appendix A, we give more in-depth examples for Bitcoin and for a `generic' longest chain PoS protocol.  We remind the reader
that our goal is not to literally model the step-by-step operation of
these protocols, but rather to replicate the essential properties of
their user selection mechanisms with a suitable choice of a permitter
oracle.

First, consider a PoW protocol like Bitcoin. To keep things simple,
we'll initially ignore Bitcoin's adjustable `difficulty parameter'
(i.e., how hard the PoW is to produce); We'll return to this point in
Section \ref{discuss} and Appendix A.  To model a simple PoW protocol of this form,
we can consider very short timeslots (say 1 second each, or even
shorter).  The resource level (i.e., hashrate) of a user in a given
timeslot is independent of the message state, so we can restrict
attention to resource pools
$\mathcal{R}: \mathcal{U} \times \mathcal{D} \rightarrow
\mathbb{R}_{\geq 0}$.
We interpret a user request $(\mathtt{U},M,t',A)$ in a timeslot~$t$ as
all of $\mathtt{U}$'s efforts during timeslot~$t$ to extend the
message state~$M$.\footnote{The parameter~$t'$ is ignored by the
  permitter, or equivalently is automatically interpreted as the current timeslot~$t$; the parameter~$t'$ is relevant only for PoS protocols.}  
(If a user submits more than one request during a timeslot, the
permitter ignores all but the first.) For
example, we can interpret $A$ as a choice and ordering of transactions
within a proposed block, along with a choice of predecessor, with the
understanding that the user will try as many different nonces as
possible during the timeslot.  The permitter then gives~$\texttt{U}$
permission to broadcast with probability proportional to
$\mathcal{R}(\mathtt{U},t)$ (so long as $A$ can be legally added to
$M$).\footnote{For example, with probability
  $\mathcal{R}(\mathtt{U},t)/M$, where $M$ is a large constant that
  depends on the assumed maximum hashrate~$I_1$ and the timeslot length.}
A notable feature of this permitter is that permission is granted 
for the broadcast of specific messages (i.e., a specific choice
of~$A$), rather than for 
a collection of messages meeting certain criteria.

There are various ways in which `standard' PoS selection processes can
work. Let us restrict ourselves, just for now and for the purposes of
this example, to considering protocols in which the only broadcast
messages are blocks, and let us consider a longest chain PoS protocol
which works as follows: For each broadcast chain $C$ and for all
timeslots in a set $T(C)$, the protocol being modeled selects
precisely \emph{one} public key who is permitted to produce blocks
extending $C$ (i.e.\ blocks whose parent is the unique leaf of $C$),
with the probability each public key is chosen being proportional to
their wealth as recorded in $C$.\footnote{Note that being permitted to
  broadcast a block is not the same as being instructed by the protocol
  to broadcast a block, and does not determine how other users will
  treat the block -- there are many contexts in which users might be
  able to produce valid blocks for which broadcast is not instructed
  by the protocol (e.g., a block extending a chain other than the
  longest one). A user may also be permitted to produce two valid
  blocks whose broadcast constitutes an overt deviation from the
  protocol, and which might be punished.}  In order to model a
protocol of this form, we can consider a (timeslot-independent)
resource pool
$\mathcal{R}: \mathcal{U} \times \mathcal{M} \rightarrow
\mathbb{R}_{\geq 0}$,
which takes the longest chain $C$ from $M$, and allocates to each
public key $\mathtt{U}$ their wealth according to $C$.\footnote{Note
  that in many PoS protocols the relevant balance is actually
  $\mathtt{U}$'s wealth according to some proper initial segment of
  $C$, and that in modeling such protocols one should adjust
  $\mathcal{R}$ accordingly. As mentioned earlier, it is also standard
  to insist that $\mathtt{U}$ has been recorded as a public key with
  non-zero stake for a minimum number of timeslots.}  Then we can
consider a permitter which chooses one public key $\mathtt{U}$ for
each chain $C$ and each timeslot $t'$ in $T(C)$, each public key
$\mathtt{U}$ being chosen with probability 
$\mathcal{R}(\mathtt{U},C)/\mathcal{T}(C)$.  (This is well defined
because the total resource pool~$\mathcal{T}$ is known to the protocol.)
That chosen public key $\mathtt{U}$ corresponding to $C$ and $t'$, is
then given permission to broadcast blocks extending $C$ whenever
$\mathtt{U}$ makes a request $(\mathtt{U},M,t',\emptyset)$ for which
$C$ is the longest chain in $M$. A notable feature
of this permitter is that the permission it gives is for the
broadcast of \emph{sets} of messages satisfying certain criteria,
i.e.\ when the permitter gives permission it is for any (otherwise
valid) block extending a given chain $C$.

To model a BFT PoS protocol, the basic approach will be very similar to that described for the longest chain PoS protocol above, except that certain other signed messages might be now required in $M$ (such as signed votes on blocks) before permission to broadcast is granted, and permission may now be given for the broadcast of messages other than blocks (such  as votes on blocks).

We refer the reader to Appendix A for two examples that are described in greater detail. 

\section{Adaptivity and Finality}  \label{A+F}

\subsection{The extended protocol and the meaning of probabilistic
  statements} \label{probs}

In order to define what it means for a protocol to be secure or live,
we first need a \emph{notion of confirmation} for blocks, which is a
function $ \mathtt{C}$ mapping any message state to a chain which is a
subset of that message state, in a manner which depends on an
initially defined parameter called the \emph{security parameter}
$\varepsilon \geq 0$.  
(E.g., in Bitcoin, one might consider a block confirmed if six blocks
follow it on the longest chain; changing the ``six'' to some other number
would yield a different notion of confirmation.)
The intuition behind $\varepsilon$ is that it
should upper bound the probability of false confirmation. Given any
message state, $ \mathtt{C}$ returns the set of confirmed blocks.

In Section \ref{permitter}, we stipulated that the protocol $\mathtt{P}=(\mathtt{I},\mathtt{O})$ is a 
 pair, where the instruction set $\mathtt{I}$ is a set of deterministic and
efficiently computable instructions specifying precisely what actions an honest user should carry out  at each timeslot,  and where $\mathtt{O}$ is the permitter. 
 In general, however, a protocol might only be considered to run relative to a specific notion of confirmation $\mathtt{C}$. We will refer to the triple $(\mathtt{I},\mathtt{O},\mathtt{C})$ as the \emph{extended protocol}.   Often we shall suppress explicit mention of $\mathtt{C}$, and assume it to be implicitly attached to a given protocol. We shall talk about a protocol being live, for example, when it is really the extended protocol to which the definition applies.

Each execution of the extended protocol is then entirely determined by: 
\begin{enumerate} 
\item The parameters;
\item The set of users and their public keys; 
\item An index specifying the program executed by the adversary; 
\item The resource pool (which may or may not be undetermined); 
\item The set of asynchronous timeslots;
\item Certain events on which probability distributions are established, including permitter responses and message delivery times. 
\end{enumerate} 

Generally, when we discuss an extended protocol, we shall do so within the context of a \emph{setting}, which constrains the set of possible choices for (1)-(6) above.  The setting might specify the probability distribution on delivery times, for example, and might restrict the set of resource pools to those in which the adversary is given a limited resource balance. When we make a probabilistic statement to the effect that a certain condition holds with at most/least a certain probability, this means that the probabilisitic bound holds for all possible values of (1)-(5) above that are not made explicit in the statement, and which are consistent with the setting.

\subsection{Liveness and adaptivity}   

In order to define liveness for a protocol  with a  notion of confirmation $\mathtt{C}$, let $|\mathtt{C}(M)|$ denote the number of blocks in $\mathtt{C}(M)$ for any message state $M$. For a given $U$, and timeslots $t_1<t_2$, let $M_i$ be  $U$'s message state at $t_i$. Let us say that $[t_1,t_2]$ is a \emph{growth interval} for $U$, if $|\mathtt{C}(M_2)|>|\mathtt{C}(M_1)|$.

\begin{definition} \label{live} 
A protocol is \textbf{live} if, for every choice of security parameter  $\varepsilon>0$, there exists $\ell_{\varepsilon}$ such that the following holds with probability at least $1-\varepsilon$ for any timeslots $t_1< t_2\in \mathcal{D}$, and for any user $U$:  If  $t_2-t_1\geq \ell_{\varepsilon}$ and $[t_1,t_2]$ is entirely synchronous, then $[t_1,t_2]$ is a growth interval for $U$.
\end{definition}

\noindent So, roughly speaking, a protocol is live if the number of
confirmed blocks can be relied on to grow over time during synchronous
intervals of sufficient length. Note also, that while Definition \ref{live} only refers explicitly to protocols, it is really the \emph{extended protocol} to which the definition applies.  In order to properly understand
Definition \ref{live}, we refer the reader to the conventions
concerning the meaning of probabilistic assertions that were described
in Section~\ref{probs}. Generally, assertions of liveness and security
will be made within the confines of a particular setting, which might
restrict the probability distribution on message delivery times, or
limit the resource balance of the adversary (but is otherwise
worst-case subject to these constraints).

In order to digest Definition \ref{live}, it is useful to understand
why it should be satisfied by a protocol like Bitcoin. 
Suppose we model Bitcoin in the unsized
and synchronous setting. According to Section \ref{sizedunsized}, this
means that we assume the existence of a fixed interval $[I_0,I_1]$
such that $I_0>0$, $I_1>I_0$, and such that 
the total resource balance always takes
values in $[I_0,I_1]$. Let us suppose that we model Bitcoin and the
permitter as discussed in Section \ref{examples} -- again, for the
sake of simplicity, we'll forget about the fact that Bitcoin makes
adjustments to the `difficulty'. Suppose also that, as part
of the setting, we assume:
\begin{enumerate}
\item[(A)] The adversary only ever controls a suitably small proportion of the total resource balance, and; 
\item[(B)] The probability distribution on the length of time for message delivery is such that the probability of delivery failure tends to 0 as the time after broadcast tends to $\infty$. 
\end{enumerate} 
In order for a block $B$ to be confirmed, $\mathtt{C}$ requires that
it should belong to the longest chain in $M$ and be followed by $x$
many blocks, where the value of $x$ is a function of the security
parameter $\varepsilon$ and the assumed restriction on the
adversary's resource balance.  Suppose that, at a given timeslot $t$,
$C$ is the longest chain seen by $U$.  Since we assume that
the total resource balance always belongs to $[I_0,I_1]$, this allows
us to find $\ell_{\varepsilon}^{\ast}$ (independent of $t$ and $C$)
such that the following holds with probability $>1-\varepsilon/2$:
Some honest user with public key $\mathtt{U}'$ is permitted to broadcast a new block at
a timeslot before $t+\ell_{\varepsilon}^{\ast}$, but after all blocks
in $C$ have been delivered to them. In order to specify the value
$\ell_{\epsilon}$ whose existence is required by Defintion \ref{live},
we can then define $\ell_{\varepsilon}> \ell_{\varepsilon}^{\ast}$
such that, with probability $>1-\varepsilon/2$, the block broadcast by
$\mathtt{U}'$ will be delivered to $U$ by timeslot
$t+\ell_{\varepsilon}$.  It then holds with probability
$>1-\varepsilon$ that the longest chain (and hence the number of
confirmed blocks) seen by $U$ at $t+\ell_{\varepsilon}$
is of length greater than $|C|$.

Now that we have defined liveness, we can also define adaptivity: 

\begin{definition} 
We define a protocol to be \textbf{adaptive} if it is live in the unsized setting. 
\end{definition} 

\subsection{Security and finality} 

Roughly speaking, \emph{security} requires that confirmed blocks
normally belong to the same chain. Let us say that two distinct blocks
are \emph{incompatible} if neither is an ancestor of the other, and
{\em compatible} otherwise.  If $B\in \mathtt{C}(M)$ where $M$ is the
message state of $U$ at time $t$, then we shall say that
$B$ is confirmed for $U$ at $t$.

\begin{definition} \label{secure} A protocol is \textbf{secure} if the following holds for every choice of security parameter  $\varepsilon>0$, for every $U_1,U_2$ and for all timeslots $t_1,t_2$ in the duration:  With probability $> 1-\varepsilon$, all blocks which are confirmed for $U_1$ at $t_1$ are compatible with all those which are  confirmed for  $U_2$ at $t_2$.   
\end{definition}

\begin{definition} 
A protocol has \textbf{finality} if it is secure in the partially synchronous  setting. 
\end{definition}

Note that BFT protocols such as Algorand are normally designed to 
have finality in this sense.
For Algorand, the duration and adversary resource bound are initially
specified as parameters, and then the protocol specifies committee
sizes and other quantities so that the probability two incompatible
blocks will ever be confirmed is less than $\varepsilon$.

\section{The Impossibility of Adaptivity and Finality} \label{imp} 

In Section \ref{permitter}, we didn't describe any conditions
requiring that the behaviour of the permitter \emph{must} be
influenced by the resource pool. The only assumption of this kind that
we shall make is stated below, and will be applied for both the sized
and unsized settings.

\begin{quote}
\textbf{No balance, no voice}:  No $\mathtt{U}$ will be given permission to broadcast messages in response to a request $(\mathtt{U},M,t,A)$ for which $\mathcal{R}(\mathtt{U},t,M)=0$. 
\end{quote}

Now that the framework and all required definitions are in place, we
can formally prove Theorem~\ref{ourAmazingTheorem}. 

\begin{thm} \label{ourAmazingTheorem}
No protocol is both adaptive and has finality.
\end{thm}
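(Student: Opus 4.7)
The plan is to assume for contradiction that some protocol $\mathtt{P} = (\mathtt{I}, \mathtt{O}, \mathtt{C})$ is both adaptive and has finality, and to derive a contradiction by constructing three executions whose local views cannot all be consistent. Partition the honest users into two nonempty disjoint groups $S_1$ and $S_2$, fix a security parameter $\varepsilon > 0$, let $\ell_\varepsilon$ be the liveness constant guaranteed by adaptivity, and fix one distinguished user $U_i \in S_i$. Execution $\alpha$ is fully synchronous, with a resource pool $\mathcal{R}_\alpha$ that assigns positive balance only to public keys controlled by $S_1$; execution $\beta$ is its mirror image; execution $\gamma$ is partially synchronous, with a pool $\mathcal{R}_\gamma$ in which both sides have balance, together with an adversary-induced asynchronous interval $[0, \ell_\varepsilon + 1]$ during which intra-side messages are delivered on time but every message between $S_1$ and $S_2$ is dropped. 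Each of the three pools can be chosen to lie in the class of possible resource pools for the unsized setting, so adaptivity applies to $\alpha$ and to $\beta$.

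The heart of the argument is an indistinguishability claim: from the local viewpoint of any user $U \in S_1$, executions $\alpha$ and $\gamma$ are statistically identical. In $\alpha$, ``no balance, no voice'' ensures that no public key controlled by $S_2$ ever receives permission to broadcast, so $U$ never sees an $S_2$-originated message; in $\gamma$, the adversary's partition has exactly the same effect on $U$. I would pick $\mathcal{R}_\alpha$ and $\mathcal{R}_\gamma$ to coincide on every tuple $(\mathtt{U}, t, M)$ with $\mathtt{U}$ controlled by an $S_1$-user and $M$ a message state reachable under the $S_1$-only view. Since in the unsized setting the permitter's response to a request $(\mathtt{U}, M, t', A)$ is a probabilistic function only of the protocol parameters, the timeslot $t$, the previous requests of $\mathtt{U}$, the tuple itself, and $\mathcal{R}(\mathtt{U}, t', M)$, one can couple $\alpha$ and $\gamma$ so that the $S_1$-users issue identical requests, receive identically distributed permissions, and observe identical intra-$S_1$ delivery times. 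The symmetric coupling identifies the $S_2$-views of $\beta$ and $\gamma$.

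Applying liveness to $\alpha$ on the entirely synchronous interval $[0, \ell_\varepsilon]$ and the user $U_1$ gives, with probability at least $1 - \varepsilon$, a new confirmed block $B_1$ beyond the genesis in $U_1$'s message state at time $\ell_\varepsilon$; the mirror step in $\beta$ gives a block $B_2$ for $U_2$. The two couplings transport these events into $\gamma$, each with probability at least $1 - \varepsilon$. In $\gamma$, $B_1$ is built entirely from $S_1$-originated messages and $B_2$ entirely from $S_2$-originated ones, so they are distinct blocks lying on disjoint extensions of the genesis and are therefore incompatible. A union bound then gives that in $\gamma$, with probability at least $1 - 2\varepsilon$, $U_1$ and $U_2$ simultaneously confirm incompatible blocks at time $\ell_\varepsilon$, contradicting finality as soon as $\varepsilon < 1/3$.

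The main obstacle is making the indistinguishability step airtight. One must verify that the framework's constraints really do force the permitter's response distribution to be determined by local data in the unsized setting, and that the freedom in choosing $\mathcal{R}_\alpha$, $\mathcal{R}_\beta$, and $\mathcal{R}_\gamma$ (subject to the $[I_0, I_1]$ total-balance constraint) suffices to make the three pools agree on every input the permitter is allowed to read. This is exactly where the unsized assumption enters: in the sized setting $\mathcal{T}$ is predetermined and available to the permitter, so $\alpha$ and $\gamma$ become globally distinguishable and the coupling breaks down, consistent with the fact that sized protocols with finality (such as Algorand) do exist.
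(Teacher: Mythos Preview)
Your proposal is correct and follows essentially the same partition/indistinguishability strategy as the paper: the paper's $\mathtt{Ex}_1,\mathtt{Ex}_2,\mathtt{Ex}_0$ correspond exactly to your $\alpha,\beta,\gamma$, and your coupling argument is a more explicit version of the paper's claim $(\dagger)$ that the state distribution of $\mathtt{U}_i$ in $\mathtt{Ex}_0$ matches that in $\mathtt{Ex}_{1+i}$. The only cosmetic differences are that the paper uses two single public keys rather than two groups $S_1,S_2$, fixes the security parameter at $1/2$ (invoking liveness with probability $>3/4$ on each side) rather than carrying a general $\varepsilon<1/3$, and appeals directly to the rule that a block cannot be broadcast before its parent is delivered to establish incompatibility, whereas you phrase this as $B_1,B_2$ lying on disjoint extensions of the genesis.
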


As stated previously, this theorem can be seen as an analog of the CAP
Theorem~\cite{gilbert2002brewer} from distributed computing for our
blockchain protocol analysis framework. Now that we have formally
defined adaptivity, finality, security, and liveness, it may be useful
to say a little more about the relationship to the CAP Theorem. While
the CAP Theorem asserts that (under the threat of unbounded network
partitions), no protocol can be both available and consistent, it is
possible for BFT protocols such as Algorand to be both live and secure
{\em in the partially synchronous setting}.  This is possible because
liveness is a fundamentally weaker property than availability:
Liveness does not require new confirmed blocks to be produced during
extended periods of asynchrony.  For example, Algorand is live, even
though block production may stop during network partitions.  The key
idea behind the proof of Theorem~\ref{ourAmazingTheorem} is that, in
the unsized (and partially synchronous) setting, this fundamental
difference disappears, with network partitions indistinguishable from
waning resource pools.  Liveness then forces the existence of growth
intervals during network partitions. In the unsized and partially
synchronous setting, security and liveness thus become incompatible,
just as consistency and availability are incompatible
according to the CAP Theorem.

\begin{proof}
  (of Theorem~\ref{ourAmazingTheorem}) The idea behind the proof 
can be summed up as follows. We consider executions of
  the protocol in which there are at least two users, both of which
  are honest, and who control public keys $\mathtt{U}_0$ and
  $\mathtt{U}_1$ respectively. Suppose that, in an execution of the
  protocol in the unsized and partially synchronous setting,
  $\mathtt{U}_0$ and $\mathtt{U}_1$ both have the same constant and
  non-zero resource balance, and that all other users have resource
  balance zero throughout the duration. According to the assumption
  `no balance, no vote', this means that $\mathtt{U}_0$ and
  $\mathtt{U}_1$ will be the only public keys which are able to
  broadcast messages. For as long as the adversary is able to prevent
  messages broadcast by each $\mathtt{U}_i$ from being delivered to
  $\mathtt{U}_{1-i}$ ($i\in \{ 0,1 \}$), the execution will be
  indistinguishable, as far as $\mathtt{U}_i$ is concerned, from one
  in which only $\mathtt{U}_i$ has the same constant and non-zero
  resource balance. The fact that the protocol is live means that,
  with high probability, $\mathtt{U}_0$ and $\mathtt{U}_1$ will see
  confirmed blocks within a bounded period of time.  
  The confirmed blocks for $\mathtt{U}_0$ will be incompatible with
  those for $\mathtt{U}_{1}$, so long as these confirmed blocks appear
  before any point at which a message broadcast by $\mathtt{U}_i$ has
  been delivered to $\mathtt{U}_{1-i}$ for some $i\in \{ 0,1 \}$. This
  contradicts security for the protocol in the partially synchronous
  setting.

To describe the argument in more detail, let $\mathtt{U}_0$ and  $\mathtt{U}_1$ be public keys controlled by different honest users. For a duration  $\mathcal{D}$ which is sufficiently long,  we consider three different resource pools:

\begin{enumerate} 
\item[$\mathcal{R}_0:$]  We let $\mathcal{R}_0$ assign the constant value $I>0$ to both $\mathtt{U}_0$ and $\mathtt{U}_1$ over the entire duration, while all other users are assigned the constant value 0. 
\item[$\mathcal{R}_1:$]  We let $\mathcal{R}_1$ assign the constant value $I$ to $\mathtt{U}_0$  over the entire duration, while all other users are assigned the constant value 0.
\item[$\mathcal{R}_2:$]  We let $\mathcal{R}_2$ assign the constant value $I$ to $\mathtt{U}_1$  over the entire duration, while all other users are assigned the constant value 0.  
\end{enumerate} 

We consider three different executions of the protocol with the same parameters, for the unsized setting in which the resource pool is an undetermined variable: 

\begin{enumerate} 
\item[$\mathtt{Ex}_0:$] Here $\mathcal{R}:= \mathcal{R}_0$. All timeslots are asynchronous and the adversary prevents the delivery of messages broadcast by $\mathtt{U}_i$ to the user controlling $\mathtt{U}_{1-i}$, for $i\in \{ 0,1 \}$.

\item[$\mathtt{Ex}_1:$]  Here $\mathcal{R}:= \mathcal{R}_1$, and we work in the synchronous setting (or in the partially synchronous setting, but without interference by the adversary). 
\item[$\mathtt{Ex}_2:$]  Here $\mathcal{R}:= \mathcal{R}_2$, and we work in the synchronous setting.
\end{enumerate} 

According to the assumption of `no balance, no voice', it follows that
only $\mathtt{U}_0$ and $\mathtt{U}_1$ will be able to broadcast
messages in any of these three executions.  
Our framework stipulates that the instructions of the protocol
for a given user at a given timeslot must be a deterministic function
of the protocol parameters, the timeslot, the keys controlled by the
user, their message state and the set of permissions they have been
given by the permitter (see Section~\ref{permitter}). 
It also stipulates that the response of the permitter to a request
$(\mathtt{U},M,t',A)$ is a probabilistic function of the protocol
parameters, the actual timeslot $t$, previous requests made by
$\mathtt{U}$, the request $(\mathtt{U},M,t',A)$, and the user's
resource level $\mathcal{R}(\mathtt{U},t',M)$.  It therefore follows
by induction on timeslots that, because the resource pool is
undetermined:
\begin{enumerate} 
\item[$(\dagger)$] For each $i\in \{ 0,1 \}$, and for  all timeslots in $\mathtt{Ex}_0$, the probability distribution on the state of the user controlling $\mathtt{U}_i$  is identical to  the corresponding distribution at the same timeslot in $\mathtt{Ex}_{1+i}$. 
\end{enumerate} 
If the protocol is adaptive, then it follows from Definition
\ref{live} that we can find a timeslot $t_0$ satisfying the following
condition: In both $\mathtt{Ex}_{1+i}$ ($i\in \{ 0,1 \}$), it holds
with probability $>3/4$ that there is at least one block which is
confirmed for $\mathtt{U}_i$ at $t_0$. By $(\dagger)$ it then holds
for $\mathtt{Ex}_0$, and for each $i\in \{0,1\}$, that with
probability $>3/4$ there is at least one block which is confirmed for
$\mathtt{U}_i$ at $t_0$. We stipulated in Section
\ref{blockstructure} that no block $B$ can be broadcast by
$\mathtt{U}:=\texttt{Miner}(B)$ at a point strictly prior to that at
which its parent has been delivered to $\mathtt{U}$. It follows that
in $\mathtt{Ex}_0$ all blocks which are confirmed for $\mathtt{U}_i$
must be incompatible with all blocks which are confirmed for
$\mathtt{U}_{1-i}$. The definition of security therefore fails to hold
for timeslot $t_0$, and with respect to the security parameter 1/2.
\end{proof}

\section{Proof-of-Stake Requires Multi-Permitters} \label{multi} 

One major difference between typical PoW and PoS longest-chain
protocols (e.g., Bitcoin vs.\ Snow White) is the order of operations
between a user choosing a proposed block to broadcast and learning
whether or not it has permission to broadcast.  In the dominant PoW
protocols, the proposed block is chosen first, and only then is
permission granted or denied; in typical longest-chain PoS protocols,
permission (to broadcast in a given timeslot at a given location) is
granted before the specific block to broadcast is chosen.  Is this
difference an artefact of the protocols developed thus far, or is
it a more fundamental distinction between
PoW and non-PoW protocols?  We next use our framework to
reason about this question.

Already from the simple examples
in Section \ref{examples}, one can see that the standard PoW and PoS
protocols are best modeled by permitters and resource pools with
quite different properties. The permitter which we described in
modeling the PoS case, for example, was able to ensure that a single
user would be given permission to extend a particular chain at a
particular timeslot, simply because it has access to the total
resource balance recorded by a given chain.  As alluded to above, another notable
difference is that the permitter we described for the PoW case gave
permission for the broadcast of specific messages, rather than for
\emph{sets} of messages satisfying certain criteria (and of size
larger than 1).  We shall refer to permitters of this type as
\emph{single-permitters}, as opposed to \emph{multi-permitters}.

A key factor in determining whether multi-permitting is inherent to non-PoW protocols is the number of possible blocks extending a given chain
$C$---by the `possible' extensions of a chain $C$, we mean those
blocks $B$ satisfying all conditions required for validity
other than being permitted by the permitter
oracle. If the number of possible extensions is large, while the
probability that the permitter gives permission for each is small,
then a user may be able to increase their probability of gaining
permission to broadcast a block by churning through as many requests as
possible. This means that the probability of success comes to depend
on computational power, rendering the protocol (at least partially)
PoW.

In order to see this more precisely, we need a precise way to
talk about the computational power of a user. So, for the purposes of
this discussion, let us say that the computational power of a user is
the number of requests they are capable of making to the permitter in
each timeslot. We'll denote the computational power of $\mathtt{U}$ by $X_{\mathtt{U}}$. In order to restrict to realistic scenarios, we'll suppose that there is some fixed upper bound $X_{\text{max}}$, for which we always have $X_{\mathtt{U}} \leq X_{\text{max}}$. 
Suppose that, at a given timeslot $t$, $C$ is the longest chain, and, for the sake of simplicity, suppose that $C$ has been seen by all users. 
Suppose further that the following conditions are satisfied:

\begin{enumerate}   
\item[$(\dagger_1)$] The permitter $\mathtt{O}$ is a single-permitter. More specifically, let $\Lambda$ be the set of requests of the form $(\mathtt{U},C,t,B)$, such that $B$ is a possible extension of $C$. There exists some $\lambda >0$, such that $\mathtt{O}$ will respond to each distinct request   in $\Lambda$ made during timeslot $t$, by giving permission to broadcast the  specific block $B$ with independent probability $\lambda \cdot \mathcal{R}(\mathtt{U},t,C)$.
\item[$(\dagger_2)$]  For some constant $\mathtt{Ext}_{\text{No}}$, there are $\mathtt{Ext}_{\text{No}}$ many possible extensions of $C$ for each $\mathtt{U}$. Each  $\mathtt{U}$ submits $\text{min}\{ X_{\mathtt{U}}, \mathtt{Ext}_{\text{No}} \}$ many  requests from $\Lambda$ (and only those) during timeslot $t$. 
\end{enumerate} 

\noindent  Let $p_{\mathtt{U}}$ be the probability that $\mathtt{U}$ is given permission to broadcast during timeslot $t$. In what follows, it will simplify calculations to consider what happens in the limit of the size of the network of users: We shall say that a given condition holds \emph{in the limit}, if it holds so long as $p_{\mathtt{U}}$ is sufficiently small for all $\mathtt{U}$. We'll say that one quantity $x$ is proportional to another quantity $y$ in the limit, if there exists some constant $c$ such that, for each $\epsilon >0$, $x/cy \in (1-\epsilon,1+\epsilon)$ in the limit.

 Proposition
\ref{PoWprop} below says that, when the number of possible extensions $ \mathtt{Ext}_{\text{No}}$ is larger than $X_{\text{max}}$, the
single-permitter $\mathtt{O}$ automatically gives rise to a PoW
protocol, since, in the limit,  the probability $\mathtt{U}$ is given permission to broadcast is then  proportional to $\mathtt{U}$'s computational
power. While Proposition \ref{PoWprop} works according to the specific
assumption that the permitter responds to each request with
independent probability, it should be clear that the basic principle
holds under much more general conditions.

\begin{proposition} \label{PoWprop}
Suppose that $(\dagger_1)$ and $(\dagger_2)$ above are satisfied, so that $\mathtt{O}$ is a single permitter, and each $\mathtt{U}$ submits $\text{min}\{ X_{\mathtt{U}}, \mathtt{Ext}_{\text{No}} \}$ many requests during timeslot $t$. Let $p_{\mathtt{U}}$ be the probability that $\mathtt{U}$ is given permission to broadcast during timeslot $t$. In the limit, $p_{\mathtt{U}}$ is proportional to 
$ \mathcal{R}(\mathtt{U},t,C) \cdot \text{min}\{ X_{\mathtt{U}}, \mathtt{Ext}_{\text{No}} \}$.  
\end{proposition}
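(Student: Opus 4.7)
The plan is a direct calculation using independence. Set $n_{\mathtt{U}} := \min\{X_{\mathtt{U}}, \mathtt{Ext}_{\text{No}}\}$ and $q_{\mathtt{U}} := \lambda \cdot \mathcal{R}(\mathtt{U},t,C)$. By $(\dagger_1)$, the permitter's responses to the $n_{\mathtt{U}}$ distinct requests submitted by $\mathtt{U}$ are independent Bernoulli trials, each succeeding with probability $q_{\mathtt{U}}$. The event ``$\mathtt{U}$ is given permission to broadcast during timeslot $t$'' is exactly the event that at least one of these trials succeeds, so
\[
 p_{\mathtt{U}} \;=\; 1 - (1-q_{\mathtt{U}})^{n_{\mathtt{U}}}.
\]

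Next I would expand this quantity and compare it with $n_{\mathtt{U}} q_{\mathtt{U}} = \lambda \cdot \mathcal{R}(\mathtt{U},t,C) \cdot \min\{X_{\mathtt{U}}, \mathtt{Ext}_{\text{No}}\}$, which is the claimed asymptotic form (with proportionality constant $c = \lambda$). Taking logarithms gives $\ln(1-p_{\mathtt{U}}) = n_{\mathtt{U}} \ln(1 - q_{\mathtt{U}})$, and Taylor expansion yields $\ln(1-x) = -x + O(x^2)$ as $x \to 0$. Both expansions apply once we know that $q_{\mathtt{U}}$ and $n_{\mathtt{U}} q_{\mathtt{U}}$ are small.

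The small step of care is therefore to verify that the assumption ``$p_{\mathtt{U}}$ is sufficiently small'' forces both $q_{\mathtt{U}}$ and $n_{\mathtt{U}} q_{\mathtt{U}}$ to be small (which is what legitimises calling this an ``in the limit'' statement in the sense of the paper). For $q_{\mathtt{U}}$ this is immediate from $p_{\mathtt{U}} \geq q_{\mathtt{U}}$ (take just one trial). For $n_{\mathtt{U}} q_{\mathtt{U}}$, I would use the standard inequality $(1-q_{\mathtt{U}})^{n_{\mathtt{U}}} \leq e^{-n_{\mathtt{U}} q_{\mathtt{U}}}$, so $p_{\mathtt{U}} < \delta$ implies $n_{\mathtt{U}} q_{\mathtt{U}} \leq -\ln(1-\delta)$, which tends to $0$ with $\delta$.

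With those bounds in hand, the conclusion follows: $p_{\mathtt{U}}/(n_{\mathtt{U}} q_{\mathtt{U}}) \to 1$ as $p_{\mathtt{U}} \to 0$, i.e.\ for every $\epsilon > 0$, once $p_{\mathtt{U}}$ is sufficiently small we have $p_{\mathtt{U}}/(\lambda \cdot \mathcal{R}(\mathtt{U},t,C) \cdot \min\{X_{\mathtt{U}}, \mathtt{Ext}_{\text{No}}\}) \in (1-\epsilon, 1+\epsilon)$, which is exactly the paper's definition of ``proportional in the limit'' with constant $c = \lambda$. I do not anticipate a real obstacle here; the only thing to be careful about is the direction of the implication between ``$p_{\mathtt{U}}$ small'' and ``$n_{\mathtt{U}} q_{\mathtt{U}}$ small'', which the exponential bound cleanly handles.
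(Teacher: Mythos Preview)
Your proposal is correct and follows essentially the same route as the paper: compute $p_{\mathtt{U}} = 1 - (1 - q_{\mathtt{U}})^{n_{\mathtt{U}}}$ and show the ratio $p_{\mathtt{U}}/(n_{\mathtt{U}} q_{\mathtt{U}})$ tends to~$1$.

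There is one minor technical difference worth noting. The paper expands $(1 - q_{\mathtt{U}})^{n_{\mathtt{U}}}$ binomially and argues that the higher-order terms vanish; to control those terms it invokes the fixed upper bound $n_{\mathtt{U}} \le X_{\text{max}}$ together with $q_{\mathtt{U}} \le p_{\mathtt{U}} \to 0$. Your logarithmic route instead uses the exponential inequality $(1-q_{\mathtt{U}})^{n_{\mathtt{U}}} \le e^{-n_{\mathtt{U}} q_{\mathtt{U}}}$ to deduce directly that $n_{\mathtt{U}} q_{\mathtt{U}} \to 0$ from $p_{\mathtt{U}} \to 0$, so you never need the hypothesis that $n_{\mathtt{U}}$ is bounded by a fixed constant $X_{\text{max}}$. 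That is a small but genuine gain in generality. (One trivial omission: you should dispose of the degenerate case $n_{\mathtt{U}} = 0$ separately, as the paper does, since the ratio is then $0/0$.)
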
 

\begin{proof} 
Define $Y_{\mathtt{U}}:= \text{min}\{ X_{\mathtt{U}}, \mathtt{Ext}_{\text{No}} \} $, so that $\mathtt{U}$ makes $Y_{\mathtt{U}}$ many requests during timeslot $t$.  If $Y_{\mathtt{U}}=0$ then $p_{\mathtt{U}}=0$ and  $\mathcal{R}(\mathtt{U},t,C) \cdot  Y_{\mathtt{U}}=0$. So suppose otherwise.  Let $\lambda$ be as defined in $(\dagger_1)$. Then the probability that at least one of the $Y_{\mathtt{U}}$ many requests made by $\mathtt{U}$ results in permission to broadcast is $1-(1-\lambda \cdot \mathcal{R}(\mathtt{U},t,C) )^{Y_{\mathtt{U}}}   $. It therefore suffices to show that:
\[   \frac{1-(1-\lambda \cdot \mathcal{R}(\mathtt{U},t,C) )^{Y_{\mathtt{U}}} }{\lambda \cdot \mathcal{R}(\mathtt{U},t,C) \cdot Y_{\mathtt{U}} }\rightarrow 1\mbox{    in the limit.   } \]
This can be shown with a straightforward analysis.
Expanding out $(1-\lambda \cdot \mathcal{R}(\mathtt{U},t,C) )^{Y_{\mathtt{U}}}$:
\begin{eqnarray*} 
(1-\lambda \cdot \mathcal{R}(\mathtt{U},t,C))^{Y_{\mathtt{U}}} &=& 1- \lambda \cdot \mathcal{R}(\mathtt{U},t,C) \cdot Y_{\mathtt{U}} +\\
&& \frac{1}{2} Y_{\mathtt{U}}(Y_{\mathtt{U}}-1)(\lambda \cdot \mathcal{R}(\mathtt{U},t,C))^2-\\
&& \frac{1}{6}  Y_{\mathtt{U}}( Y_{\mathtt{U}}-1)( Y_{\mathtt{U}}-2)(\lambda \cdot \mathcal{R}(\mathtt{U},t,C))^3 +\cdots 
\end{eqnarray*}
\noindent We therefore have:
\begin{eqnarray*} \label{eq1} \frac{1-(1-\lambda \cdot \mathcal{R}(\mathtt{U},t,C))^{Y_{\mathtt{U}}}}{\lambda \cdot \mathcal{R}(\mathtt{U},t,C) \cdot Y_{\mathtt{U}}} &= &1-\frac{(Y_{\mathtt{U}}-1)\cdot \lambda \cdot \mathcal{R}(\mathtt{U},t,C)}{2} + \\
&& \frac{(Y_{\mathtt{U}}-1)(Y_{\mathtt{U}}-2)(\lambda \cdot \mathcal{R}(\mathtt{U},t,C))^2}{6} + \cdots.\end{eqnarray*}
\noindent Now, since $Y_{\mathtt{U}}>0$, we have $\lambda \cdot  \mathcal{R}(\mathtt{U},t,C)\leq p_{\mathtt{U}}$, so that $\lambda \cdot  \mathcal{R}(\mathtt{U},t,C)$  must tend to zero as $p_{\mathtt{U}}$ tends to 0. Since $Y_{\mathtt{U}}$ is always less than the fixed bound $X_{\text{max}}$, the r.h.s.\ of (\ref{eq1}) therefore tends to 1 in the limit. 
\end{proof} 

In a standard PoS protocol, for example, one usually runs the
lotteries choosing users to produce blocks by having users hash
their public key, or some signed message, together with the timeslot
identifier and a frequently updated `random seed'. If the resulting
hash (considered as a real number) is the lowest produced, or if it is
below a threshold that depends on their stake, then that user might be
allowed to produce the next block. 
If one wanted the permission to
broadcast to be block-specific, one \emph{could} require users to
enter each proposed block as an extra input to the hash. Doing so
would mean that users who intend to produce blocks are now
incentivised to churn through many different possibilities for the
block as entry to the hash. So the resulting protocol becomes a
PoS/PoW hybrid.

In principle, however, and in situations where less possibilities are
required for each block, one certainly can envisage protocols which
use single-permitters, and which could be implemented using PoS. As a
simplistic example, we might consider a protocol which is aimed at
recording the time of a particular event. At each in a sequence of
short timeslots, a single user might be selected and given permission
of one of two forms.  Either:
\begin{enumerate} 
\item[(a)] They are given permission  to broadcast a block recording that, ``The event has happened by this timeslot'', or;
\item[(b)] They are given permission to broadcast a block recording that,  ``The event is yet to take place''. 
\end{enumerate} 
\noindent  
To ensure single-permitting, the parent of the block should also be
specified as part of the permission (e.g., with permission being given for
different parents in some rotating fashion). Honest users are then
asked to broadcast the permitted block only in the case that the
information recorded by the block and all ancestors is correct. Such a
protocol can be implemented using PoS, and the small number of
possibilities for each block means that one can do so without 
degenerating into a PoW protocol.

\section{Discussion} \label{discuss} 

\subsection{Defining finality}

The term `finality' is sometimes used to mean the \emph{absolute} guarantee
that blocks of transactions will not be revoked once committed to the
blockchain with a suitable level of confirmation. We have defined a
different (probabilistic) notion of finality, and have argued that it
can be effectively applied to the categorisation and analysis of
blockchain protocols. It may be instructive, however, to further
examine whether the former informal notion -- let's
call it \emph{absolute finality} -- is likely to be useful for the analysis of blockchain protocols.

To make things concrete, let us consider the case of Algorand. For the
purposes of this discussion, all one needs to know about Algorand is
that block confirmation revolves around the selection of committees,
and that the protocol relies for its security on the idea that an
adversary with suitably bounded stake will never have a committee
majority.\footnote{In fact, never more than a third of any given
committee.} Under appropriate modeling assumptions, one can show
that the chance of the adversary gaining a committee majority at any
point during the predetermined duration of the protocol
is indeed negligible. Since the process of selecting users to be
committee members is probabilistic, however, it certainly is
\emph{possible} that there will exist committees controlled entirely
by the adversary.  At a given moment in time it \emph{could} turn out
to be the case, even if only with negligible probability, that a
number of prior committees have actually had dishonest majorities, and
are now providing confirmation for an alternative
blockchain. So Algorand fails to have absolute finality as
a simple consequence of the fact that certain aspects of the process
are best modeled as probabilistic.

The question then becomes, 
is it meaningful in a blockchain context to worry about the
distinction between an event which occurs with probability which is
\emph{essentially} 0, and an event which holds with probability
\emph{exactly} 0?  While it might be possible for a committee to have
a dishonest majority, how much does this matter if the probability is
$<10^{-10}$ that this occurs at any time during the execution of the
protocol? We take the position that if a permissionless protocol
achieves absolute finality given appropriate modeling assumptions
(such as the security of elliptic curve cryptography, or the fact that
a given hash function is collision resistant), then it still holds
with non-zero probability  
that some aspect of the
modeling assumptions fails to hold. So the distinction is really a
matter of where one hides the probability of failure.

\subsection{Does finality matter?}\label{ss:dfm}

The extent to which protocol finality is important is an interesting
question.  We have defined finality here so as to be most useful for
classification purposes.  The notion of finality that we consider
requires being secure in the face of unbounded periods of network
failure; one might argue that this is overkill in practice.  For
example, one relaxation would require only that a protocol be
secure in the face of \emph{realistically bounded} periods of network
failure; this, in turn, may allow for greater protocol adaptivity.

Let us explore this idea further in the context of Bitcoin. 
In Section \ref{examples}, we considered how to model a PoW protocol
that was a simplified version of Bitcoin, in the sense that we did not
consider the updates to the `difficulty parameter' that are
implemented every couple of weeks in Bitcoin. Now that we have
formally defined security and adaptivity, we can consider in more
detail what differences are caused by these updates to the difficulty
parameter. In fact, Bitcoin is normally considered to be executed with
a notion of confirmation which is particularly insensitive to the
difficulty parameter -- a block is considered confirmed once it
belongs to the longest chain and is followed by a fixed number of
blocks (six being a common choice). According to this notion of
confirmation, network partitions of a few hours may suffice to produce
a situation in which different blocks are confirmed for different
users.  If one wants to avoid this, one response is
to consider the same
protocol paired with a notion of confirmation that requires blocks to
be produced at a certain \emph{rate}. For example, one might consider
a block to be confirmed if it belongs to the longest chain and is
followed by $x\geq 6$ many blocks, which have been produced in less
than $x/5.5$ many hours. The Bitcoin protocol with this notion of
confirmation is still adaptive, but the network partitioning attack
described in the proof of Theorem \ref{ourAmazingTheorem} would now
have to be carried out over a considerably extended interval of
time. One might argue that such extended network partitions are
unlikely, and that, realistically speaking, adaptivity (even if slow)
is likely to be beneficial in ensuring liveness.

\section{Concluding remarks} \label{conc}

Our main aim in this paper has been to establish a framework for
analysing permissionless blockchain protocols that blackboxes the
precise mechanics of the user selection process. Establishing such a
framework allows us to prove impossibility results, and to isolate the
properties of the selection process which are significant in the sense
that they impact the way in which the protocol must be designed, or
influence properties of the resulting protocol, such as security in a
range of settings. We have focussed on the difference between the
sized and unsized settings, and have shown that the choice of setting
is intimately related to a fundamental tradeoff for cryptocurrency
protocols: A protocol cannot deliver finality for block confirmations
while at the same time being adaptive. The formal dichotomy which
results can be seen as elucidating the informal division of
permissionless blockchain protocols into those which are longest chain
type protocols such as Bitcoin on the one hand, and those protocols
such as Algorand, Casper FFG \cite{buterin2017casper} or proof-of-stake (PoS) implementations
of Tendermint or Hotstuff on the other, which work by importing
traditional Byzantine-Fault-Tolerant protocols from the permissioned
to the permissionless setting.

In the description of the framework presented here, explicit mention was made of an adversary who displays byzantine behaviour.  The expectation is that properties of protocols are asserted modulo the existence of a bounded adversary. So assertions of liveness and security are made in a setting with explicit bounds on the adversary, and the requirement is that the protocol should behave well irrespective of the behaviour of the adversary, within the given bounds.  This is an entirely standard form of analysis in the distributed computing literature. There is a general understanding in the blockchain community, however, that in the blockchain setting there is also the need for a deeper game-theoretic analysis, which  takes account of user incentives. It is not enough that the protocol should perform well given arbitrary behaviour from the adversary. Given arbitrary behaviour by the adversary, it should also be the case that the instructions of the protocol constitute something like a Nash equilibrium for the honest users. It would be interesting to use and expand our framework in order to achieve impossibility results along these lines.

As well as allowing for impossibility results, a benefit of our framework may also be in providing some modularity for the description and analysis of protocols. For example, the description of PoS protocols tends really to consist of two components. One has to describe how lotteries are to be implemented securely, so as to provide an appropriate mechanism for user selection, and then one has to describe the protocol to be carried out by users who are selected to update the state. Once a mechanism for orchestrating lotteries has been agreed on (such as that used in Algorand), one might then want to describe a range of protocols, which work very differently from each other, but which use the same basic method of user selection. Or one might want to describe a protocol that uses the same method of user selection as Algorand, but which could be updated to use another method of user selection should something superior be developed later. Blackboxing the process of user selection via the use of permitters may therefore allow for a more modular description and analysis. 
   
A further avenue for research would be to use the
framework we have described here to formalize another notable
difference between protocols which are adaptive and protocols which
have finality, which concerns the nature of `proof of
confirmation'. For BFT protocols, it will generally be the case that
the very existence of a certain set of signed objects may suffice to
establish confirmation with high probability. For example, in Algorand, the existence of a block, together with an appropriate set of committee signatures establishing consensus for inclusion of the block, is sufficient to prove beyond reasonable doubt that the block can be considered confirmed. For Bitcoin and other
adaptive protocols, on the other hand, a user will only believe that a
certain chain is the longest until they are shown a longer chain. For
the adaptive protocols, in other words, one needs to see a user's full
message state in order to know whether they consider a given block
to be confirmed. For protocols with finality, by contrast, certain sets of publications will constitute
proof of confirmation, simply by virtue of being a \emph{subset} of a
user's state. We suspect that there are interesting interactions with
the resource setting to be explored in this regard.




\section{Appendix A -- Two Modeling Examples}  

In Section \ref{examples} we already considered how to model PoW and PoS protocols. The idea of this appendix is just to flesh those details out a little. In particular, we previously ignored Bitcoin's adjustable difficulty parameter, and we will now drop that simplifying assumption. 

\subsection{Modeling Bitcoin}  

We will assume that the reader is entirely familiar with the Bitcoin protocol, the conditions for block validity, and so on. In order to decide how to model Bitcoin, we just have to specify how the timeslots, the resource pool $\mathcal{R}$ and the permitter $\mathtt{O}$ are defined. 
We consider these in order.  

To model Bitcoin, we can use very short timeslots (say 1 second each, or even
shorter). The exact length of timeslots does not matter very much, but we will be assuming that each public key only attempts to mine at most one block in each timeslot. So the shorter timeslots are, the weaker this assumption becomes. We also want timeslots to be sufficiently short that any given miner is unlikely to produce a block in any single given timeslot. For the sake of concreteness,  let us  fix timeslots at 1 second each. Then defining the resource pool $\mathcal{R}$ is also simple.   The resource level (i.e., hashrate) of a user in a given
timeslot is independent of the message state, so we can restrict
attention to resource pools
$\mathcal{R}: \mathcal{U} \times \mathcal{D} \rightarrow
\mathbb{R}_{\geq 0}$. The value $\mathcal{R}(\mathtt{U},t)$ is the hashrate (i.e.\ hashes per second) of the public key $\mathtt{U}$ during timeslot $t$, i.e.\ the number of hashes per second that the user $U$, who controls the key $\mathtt{U}$, executes in attempting to mine a block with $\mathtt{U}$ specified as the miner. 

Our main task is therefore to determine how the permitter functions.
As described in Section \ref{examples}, we interpret a user request $(\mathtt{U},M,t,A)$ made during timeslot~$t$ as
all of $\mathtt{U}$'s efforts during timeslot~$t$ to mine a new block (or, rather, all of the efforts that the owner $U$ makes on behalf of the public key  $\mathtt{U}$). If $\mathtt{U}$ submits more than one request during a timeslot, the
permitter ignores all but the first. In order for the permitter to give a positive response, we suppose that $A$ must be an otherwise valid\footnote{I.e.,\ valid in all senses except that it has not yet been permitted for broadcast by $\mathtt{O}$.} block extending the longest chain in $M$. 
If this condition is satisfied then the \emph{probability} that the permitter gives a positive response will depend on $\mathtt{U}$'s hashrate, but will also depend on the adjustable  difficulty level. We  model the adjustable  difficulty level with a real valued function $p(M)$. So upon receiving the request above,  and when all other conditions for a positive response that we have already listed are satisfied, $\mathtt{O}$ now gives permission to broadcast $A$ with probability: 

 \[ \text{min} \{ p(M)\cdot \mathcal{R}(\mathtt{U},t),1 \}.\] 
 
\noindent  In order to complete our description of the permitter, it
therefore remains to specify how $p(M)$ is determined. Of course, this
has to work in essentially the same way as Bitcoin: $p(M)$ will be the
last in a sequence of real number values $p_1,p_2,\dots $ that is
updated every 2016 blocks along the longest chain in $M$. Recall that
the difficulty level is adjusted to try and maintain block
production at a rate of one block every 10 minutes ($=600$ seconds) on
average and that, initially, the difficulty level in Bitcoin
required a hash ending with 32 zeros. So we start with $p_1=
\frac{1}{600 \cdot 2^{32}}$. Every 2016 blocks (working along the
longest chain in $M$), a production time $T_i$ (in seconds) for the  $i$th sequence of  2016 blocks is determined from the block timestamps.   Then, subject to certain caveats listed below,  we define: 
 
 \[ p_{i+1}= p_{i} \cdot \frac{T_i}{2016\times 600}. \]

\noindent The caveat to the definition of $p_{i+1}$ above is that, to stop the difficulty level changing too quickly, Bitcoin specifies that $p_{i+1}$ can change by at most a factor 4 at a time, i.e. that $p_{i+1}$ must belong to $[\frac{1}{4} p_i, 4p_i]$. So, in line with Bitcoin,  if the definition of $p_{i+1}$ above gives a value $x$ outside this interval $[\frac{1}{4} p_i, 4p_i]$, then we define $p_{i+1}$ to be whichever of $\frac{1}{4}p_i$ and $4p_i$ is closest to $x$. 

The reader may notice that the fixed interval $[I_0,I_1]$, which is described in Section \ref{sizedunsized} as being a significant part of the underlying assumptions for the unsized setting, does not explicitly feature in our description of the model for Bitcoin. The existence of this interval does become significant, however, once one tries proving liveness for the extended protocol.   

\subsection{Modeling a `generic' longest chain PoS protocol} 

Rather than deal with the idiosyncrasies of any particular well known  PoS  protocol, for the sake of simplicity we will consider how to model a `generic' longest chain PoS protocol, for which the only broadcast  messages are blocks. Since we already described roughly how to model protocols of this form in Section \ref{examples},  the point of this section is just to examine in more detail how choices in the protocol definition will be reflected in the model.

Most PoS protocols already consider explicit timeslots, and allow for the addition of one new block to the longest chain for each timeslot. So we will consider a protocol which comes with explicitly defined timeslots of this form, and we will assume that each block $B$ comes with a corresponding timeslot $t(B)$ -- we will also refer to $t(B)$ as the \emph{timestamp} for $B$. For the sake of concreteness we will suppose that each timeslot is 30 seconds long.   We assume that, at each timeslot, the protocol directs honest users to try and extend the longest chain. In order to determine our model, we are left to specify how $\mathcal{R}$ and the permitter $\mathtt{O}$ should be defined.

First of all, let us consider $\mathcal{R}$. Of course, the basic idea with a PoS protocol is that the resource pool should reflect a public key's stake in the currency. In order to model a
protocol of this form, we can therefore consider a timeslot-independent
resource pool
$\mathcal{R}: \mathcal{U} \times \mathcal{M} \rightarrow
\mathbb{R}_{\geq 0}$,
which takes the longest chain $C_M$ from $M$, and allocates to each
public key $\mathtt{U}$ a resource balance which is determined by the information recorded in $C_M$. 
Precisely how this resource balance should be determined from $C_M$ will, however, depend on the particular details of the protocol. Let $t(C_M)$ be the timestamp of the leaf of $C_M$. It is standard practice in PoS protocols to require that, if a user is to produce a block which extends $C_M$, then they should have a non-zero stake in the currency at some timestamp $t$ which is significantly less than $t(C_M)$. 
For the sake of concreteness, let us suppose that the protocol we are modelling considers the relevant balance to be that at timeslot $t^{\ast}:=\text{max} \{ t(C_M)-\text{ 1 hour}, 0 \}$. Then we define $\mathcal{R}(\mathtt{U},M)$ to be $\mathtt{U}$'s stake at timeslot $t^{\ast}$, as recorded in $C_M$.

Next, let us consider the permitter $\mathtt{O}$. Again, the precise details as to how we define $\mathtt{O}$ will depend on the protocol being modeled.  It is fairly common for PoS protocols to specify that blocks cannot have parents which are too much older than they are. So, for the sake of concreteness, let us suppose that the protocol we are modeling  requires that each block $B$ must have a parent whose timestamp is at most an hour earlier than $t(B)$ in order to be valid.  Define $T(C_M):= \{ t|\ t\in \left(t(C_M),t(C_M) + \text{ 1 hour} \right]\}$. Let us suppose that, for each $t\in T(C_M)$, the protocol being modeled selects
precisely one public key who is permitted to produce blocks
extending $C_M$ (i.e.\ blocks whose parent is the unique leaf of $C$ and with timestamp $t$),
with the probability each public key $\mathtt{U}$ is chosen being proportional to $\mathcal{R}(\mathtt{U},M)$.  In this case, we can
simply consider a permitter which chooses one public key $\mathtt{U}$ for
each chain $C$ and each timeslot $t\in T(C)$, each public key
$\mathtt{U}$ being chosen with probability 
$\mathcal{R}(\mathtt{U},C)/\mathcal{T}(C)$.  This is permissable
because the total resource pool~$\mathcal{T}$ is a predetermined variable.
That chosen public key $\mathtt{U}$ corresponding to $C$ and $t$, is
then given permission to broadcast blocks extending $C$ whenever
$\mathtt{U}$ makes a request $(\mathtt{U},M,t,\emptyset)$ for which
$C=C_M$, i.e., for which $C$ is the longest chain in $M$.

\end{document}